\newtheorem{theorem}{Theorem}
\newtheorem*{cor*}{Corollary}
\newtheorem{lem}[theorem]{Lemma}
\newtheorem{pro}[theorem]{Proposition}
\renewcommand{\selectlanguage}[1]{} 
\begin{document}
\title{Phase and gain stability for adaptive dynamical networks}

\author{Nina Kastendiek}

\affiliation{These authors have contributed equally.}
\affiliation{Institute for Chemistry and Biology of the Marine Environment, University of Oldenburg, 26129 Oldenburg, Germany}

\author{Jakob Niehues}
\email[corresponding author: ]{jakob.niehues@pik-potsdam.de}

\affiliation{These authors have contributed equally.}
\affiliation{Potsdam Institute for Climate Impact Research (PIK),
 Member of the Leibniz Association, P.O. Box 60 12 03, D-14412 Potsdam, Germany}
 \affiliation{
Technische Universit\"at Berlin, ER 3-2, Hardenbergstrasse 36a, 10623 Berlin, Germany}

\author{Robin Delabays}

\affiliation{School of Engineering, University of Applied Sciences of Western Switzerland HES-SO, Sion, Switzerland}

\author{Thilo Gross}

\affiliation{Institute for Chemistry and Biology of the Marine Environment, University of Oldenburg, 26129 Oldenburg, Germany}
\affiliation{Helmholtz Institute for Functional Marine Biodiversity (HIFMB), 26129 Oldenburg, Germany}
\affiliation{Alfred Wegener Institute (AWI), Helmholtz Center for Polar and Marine Research, 27570 Bremerhaven, Germany}

\author{Frank Hellmann}

\affiliation{Potsdam Institute for Climate Impact Research (PIK),
 Member of the Leibniz Association, P.O. Box 60 12 03, D-14412 Potsdam, Germany}

\date{\today}

\begin{abstract}
In adaptive dynamical networks, the dynamics of the nodes and the edges influence each other. We show that we can treat such systems as a closed feedback loop between edge and node dynamics. Using recent advances on the stability of feedback systems from control theory, we derive local, sufficient conditions for steady states of such systems to be linearly stable. These conditions are local in the sense that they are written entirely in terms of the (linearized) behavior of the edges and nodes.

We apply these conditions to the Kuramoto model with inertia written in adaptive form, and the adaptive Kuramoto model. For the former we recover a classic result, for the latter we show that our sufficient conditions match necessary conditions where the latter are available, thus completely settling the question of linear stability in this setting.
The method we introduce can be readily applied to a vast class of systems. It enables straightforward evaluation of stability in highly heterogeneous systems.
\end{abstract}

%\pacs{}% insert suggested PACS numbers in braces on next line

\maketitle

\textbf{
In complex adaptive networks, both the nodes and the connections between them evolve and adapt over time. A fundamental question is, how these evolving interactions influence the system's stability. This paper introduces a new method to analyze such adaptive dynamical systems by treating them as a closed feedback loop between node and edge dynamics. By drawing on recent developments in control theory, we derive a theorem on sufficient conditions for the linear stability of the interconnected system based on the properties of its components. The theorem provides concise local (node-wise and edge-wise) stability conditions. In contrast to standard results, we do not have to assume that all nodes and edges are the same, and our conditions are largely independent of details of the connectivity. We apply these conditions to paradigmatic models in network dynamics and show that they provide new insights and confirm existing results. This approach offers a powerful and accessible tool for evaluating the stability of a wide range of complex systems, including those with highly heterogeneous components or equilibria.
}

\section{Introduction}
Many systems can be conceptualized as networks in which the network topology is evolving while there are also simultaneously dynamics in the network nodes. If these types of dynamics interact, a feedback loop between local and topological dynamics is formed, and the system can be called an \emph{adaptive network} \cite{anreview}.  

Adaptive networks can exhibit rich dynamics and complex self-organization as the network topology can act as a memory that is shaped by past dynamics and thus effectively increases the dimensionality of the phase space. Dynamics of adaptive networks play a role in a wide range of phenomena including social distancing in epidemics \cite{anepi,marceau,scarpino}, opinion formation processes in humans \cite{kozma,Hegselmann,sanmiguel} and animals\cite{couzin}, strategic interactions\cite{skyrms,do}, neural self-organization\cite{bornholdt,meisel,kuehn}, and ecology \cite{ecorev}, among many others \cite{bernerrev}.    

Due to the dynamical interplay between state and topology, mathematical analysis of the dynamics of adaptive networks is difficult. Hence, much of the earlier literature in the field focuses on discrete-state adaptive networks that can be modeled by systems of ordinary differential equations using moment expansions \cite{demirel}.  
More recently, the master stability function approach\cite{levin,pecora_master_1998} has been generalized to broad classes of adaptive networks\cite{berner_desynchronization_2021, berner_synchronization_2021}. However, the use of master stability functions hinges on the symmetry between network nodes. Hence, this approach only allows for stability and bifurcation analysis of homogeneous states of adaptive networks. 

In this work, we introduce a new method to analyze the stability of heterogeneous adaptive dynamical networks by leveraging recent advances in linear algebra and control theory. The key ingredient is to write the system in the form of a feedback loop between node and edge variables, represented by transfer matrices with a block diagonal structure. The central new tool that enables this analysis is the concept of the phase of a matrix \cite{wang_phases_2020}. This gives information complementary to the information in the singular values. It was already observed in \citet{wang_phases_2023} that these phases reveal interesting information on the Laplacian of a graph.

The central result of \citet{chen_phase_2024} is that the phases of transfer matrices can be used to give sufficient stability conditions for interconnected systems. \citet{zhao_when_2022} observed that phase and gain information can be combined to cover a much broader class of systems. In this paper, we combine the observation that phase analysis is well-behaved for Laplacian-like systems, with the results of \citet{chen_phase_2024,zhao_when_2022} to provide novel sufficient stability conditions for adaptive dynamical systems.
Furthermore, we leverage the natural block structure of adaptive networks to obtain local instead of global conditions.

We apply these conditions to the paradigmatic adaptive Kuramoto model, and find that the sufficient stability condition we provide, matches the necessary stability condition of \citet{do_topological_2016} where the latter is applicable. Together, these results completely characterize the stable steady state configurations of the adaptive Kuramoto model. This demonstrates that, despite their generality, the conditions are not very conservative in this important special case.
Furthermore, we recover standard results for the classical Kuramoto model, and the Kuramoto model with inertia.
A companion paper \cite{niehues_small-signal_2024} develops the necessary formulations and results to apply these methods to complex oscillator models of power grids \cite{kogler_normal_2022, buttner_complex_2024}.

Our paper proceeds as follows:
In \ref{sec:high level sketch}, we provide an intuitive introduction to the concepts of controlling feedback systems.
In \ref{subsec:transfer functions}, we recall important fundamental notions in control theory.
Readers familiar with control theory can skip these two sections, and start with section \ref{subsec:response MIMO} which gives the central definitions we need for our paper, and \ref{subsec:feedback stability MIMO LTI} which gives the recent results on the feedback stability of systems that we build upon.
In \ref{sec:stability in adaptive networks}, we give our main results by adapting these concepts to the analysis of adaptive dynamical networks.
In \ref{sec:applications} we apply them to example systems.
We recover state-of-the-art results and extend them to heterogeneous systems.

Throughout the paper we write capital bold letters for matrices, e.g., $\bm M$, and lower case bold letters for vectors, e.g., $\bm x$.

\section{Phase stability for dynamical systems: High-level sketch}
\label{sec:high level sketch}
Here we give an intuitive introduction to the concepts that we later treat rigorously.

Consider the set of equations:
\begin{align}
    \dot x &= - b \cdot y \label{eq:physicist-toy-model 1}\\
    y &= d \cdot x \label{eq:physicist-toy-model 2}
\end{align}
These are two systems that are connected by treating the ``output" of the one as the ``input", or driving force, of the other.

Such interconnected linear systems are studied in depth in control theory. One system is often considered a ``plant", that is the system we wish to control using the inputs, and the second is a feedback controller. The interconnected system is called the closed loop system.

One of the foundational results of the theory of the stability of such feedback systems is the small gain theorem \cite{zames_input-output_1966}. If we drive each of the systems with an oscillation of a fixed frequency, the response will be a phase shifted oscillation with a different amplitude. In the case of \eqref{eq:physicist-toy-model 1}:
\begin{align}
    \dot x &= - b \cdot y \; :\\
    y = \exp(i \omega t) &\Rightarrow x = \frac{- b}{i \omega} \exp(i \omega t) + c\\
    &\Rightarrow x = \sigma \exp(i \omega t + \phi) + c
\end{align}
where $\sigma = \left|\frac{b}{\omega}\right| > 0$ is called the gain and $\phi = \arg\left(\frac{- b}{i \omega}\right)$ is called the phase of the response.

The gain at each frequency is (the absolute value of) the factor by which the oscillation is amplified/damped. The small gain theorem\cite{zames_input-output_1966} states, that if the product of the gains of the two interconnected systems is smaller than $1$ at all frequencies, the interconnected system is stable.

The gain of the second equation \eqref{eq:physicist-toy-model 2} is just $|d|$, however, the gain of the first equation is $\left|\frac{b}{\omega}\right|$ and becomes arbitrarily large. Consequently, small gain theorems can not prove stability of such connected systems.

Allowing $b$ and $d$ to be complex, the condition for stability is of course $\Re(bd) > 0$. This result can be recovered by using the phases of the system under study. Small phase theorems give conditions of the type:
\begin{equation}
    - \pi  < \arg(d) + \arg\left(\frac{b}{i \omega}\right) < \pi \quad \forall \omega \neq 0 
\end{equation}
which reduces to the exact result. A strong advantage of this condition is that it is invariant under scaling the coefficients $b$ and $d$ by positive values.

The paradigmatic Kuramoto model \cite{kuramoto_self-entrainment_1975} can be written in the same structure as \eqref{eq:physicist-toy-model 1}, \eqref{eq:physicist-toy-model 2}. To linear order, the driving force from the neighboring nodes is provided by a weighted Laplacian multiplying the phases $\bm x$.
Again, we know linear stability conditions for the simplest case: If the weighted Laplacian $\bm L$ is positive semi-definite, then $\dot{\bm x} = - \bm L \bm x$ is semi-stable. In particular, this is the case if the weights are positive.

The small gain theorem was stated for higher dimensional systems from the start, with the singular values of matrices taking the place of $\sigma$ above \cite{zames_input-output_1966}. However, in Kuramoto-type models, we typically have that the uncoupled system can move freely along a limit cycle. There is no local force pushing it towards a particular phase, and thus the gain of the nodal dynamics will often be infinite. Small gain theorems do not apply for the same reason as in \eqref{eq:physicist-toy-model 1}, \eqref{eq:physicist-toy-model 2}.

In contrast to gain theorems, phase conditions for the stability of higher dimensional systems were only derived recently \cite{chen_phase_2024}. They build on a new definition of the phases of a matrix \cite{wang_phases_2020}. These definitions allow treating systems like the linearized Kuramoto oscillator that are not amenable to small gain analysis. Intriguingly, the phases on which these small phase theorems rely are also invariant under rescaling in the following sense: They are matrix functions that satisfy $\phi(\bm A) = \phi(\bm M \bm A \bm M^\dagger)$ for full rank, square $\bm M$.

As we will see, these types of relationships allow us to reduce the phase conditions on the weighted Laplacian $\bm L = \bm B \bm D \bm B^\dagger$ to phase conditions on the diagonal $\bm D$. For the simple Kuramoto model, the condition that the weights have to be positive is easily recovered. Thus, the phase analysis subsumes the fact that positive weights imply semi-definite Laplacians.

In the context of phase analysis, the network weights do not need to be constant, though. They can react to the inputs (as mapped to the edges by $\bm B^\dagger$) and the phase response of their dynamics will determine the stability characteristics of the full interconnected system.

\section{Control Theory}
\label{sec:control theory}
To make this paper more self-contained for a physics audience, this section reproduces key concepts and results from control theory. We begin by recalling key properties of transfer functions in the one-dimensional case. We then discuss the definition of the magnitude and phase response of higher dimensional linear systems (in this context typically called multiple input, multiple output, linear time-invariant: MIMO LTI), and the underlying concepts of matrix phase and magnitude. Finally, we present the phase and gain conditions for stability in interconnected feedback systems that we will use below.

\subsection{Transfer functions and stability}
\label{subsec:transfer functions}

Transfer functions are a tool for understanding how linear systems respond to inputs or disturbances, thus providing insights into their behavior and stability. We give an introduction to stability analysis with the help of transfer functions. For more detailed explanations, refer, for example, to \citet{levine_control_2018, bechhoefer_control_2021}. 

The transfer function of an LTI system relates the system's input to its output in the Laplace domain. The \textit{Laplace transform} converts a function $f(t)$ of a real variable $t$, typically in the time domain, to a function $F(s)$ of a complex variable $s$ in the complex-valued frequency domain, also called $s$-domain 
\begin{align}
\mathcal{L}\{f(t)\} = F(s) = \int_0^{\infty} e^{-st} f(t) \, dt.
\end{align}
The complex frequency variable $s$ is defined as $s = \rho + i\omega$, where the real part $\rho$ is related to growth or decay and the imaginary part $i\omega$ corresponds to oscillatory components.
Following conventions, we will often use $f$ for the function $f(t)$ and its Laplace transform $F(s)$. The context will make it clear which one is at consideration.

The \textit{transfer function} is defined as the ratio of the Laplace transform of the output $Y(s)$ to the Laplace transform of the input $U(s)$, i.e.,  $H(s) = \frac{Y(s)}{U(s)}$.
It is convention to assume without loss of generality that all initial conditions are zero.

As an example, consider the linear system
\begin{align}
\label{eq:example ode}
\dot x(t) = -a x(t) + u(t)
\end{align}
with initial conditions equal to zero: $x(0) = 0$. Taking the Laplace transform, we have
\begin{align}
s X(s) &= -a X(s) + U(s)\;,\\
X(s) &= \frac{1}{s+a} U(s)\;.
\end{align}
With $x(t)$ as the output, the transfer function $G(s)$ of the system is therefore
\begin{align}
\frac{X(s)}{U(s)} &= G(s) = \frac{1}{s+a}.
\end{align}
Setting the denominator to zero, $s+a=0$ shows that the system has a pole at $s=-a$.
The inverse Laplace transform of $G(s)$ is
\begin{align*}
\mathcal{L}^{-1}\left\{\frac{1}{s+a}\right\}=e^{-at}.
\end{align*}
The inverse Laplace transform turns the multiplication with $G(s)$ into convolution in the time domain:
\begin{align}
x(t) &= \int_0^t e^{-a (t-t')} u(t') dt',
\end{align}
which is the well-known solution of \eqref{eq:example ode} for $x(0)=0$.

A \textit{real-rational} transfer function can be expressed as $G(s) = \frac{N(s)}{D(s)}$, where $N(s)$ and $D(s)$ are polynomials in $s$ with real coefficients. In a \textit{proper} transfer function, the degree of the numerator polynomial $N(s)$ does not exceed the degree of the denominator polynomial $D(s)$, ensuring that the function does not grow unbounded as $|s|$ approaches infinity. Real-rational proper transfer functions are exactly those that correspond to dynamical systems as above.

Values of $s$ that satisfy $N(s) = 0$ define the zeros of the system, whereas values of $s$ that satisfy $D(s) = 0$ correspond to the poles of the system.

For a continuous-time LTI system represented by a transfer operator to be \textit{stable}, all poles must have negative real parts, i.e., they must lie in the left half of the complex plane. In the example above, this corresponds to $a > 0$. This ensures that, for any bounded input, the system's output remains bounded, known as Bounded Input, Bounded Output (BIBO) stability. A system that is \textit{semi-stable} (or marginally stable) can have poles on the imaginary axis, provided no poles exist in the right half of the plane. Any pole with a positive real part indicates instability. The set of \textit{real-rational stable proper} transfer functions is denoted ${\cal RH}_\infty$.

\begin{figure}
\centering
\begin{tikzpicture}
    % draw block
    \tikzstyle{block} = [draw, rectangle, minimum height=3em, minimum width=4.7em]
    \node [block] (G) at (5.5,0){$G(s)$};
    \draw[-{Latex[length=3mm]}] (3.5,0) -- (G);
    \draw[-{Latex[length=3mm]}] (G) -- (7.5,0);
\end{tikzpicture}
\caption{Open-loop system. Arrows indicate inputs and outputs.}
\label{fig:openloop}
\end{figure}
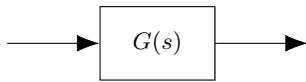
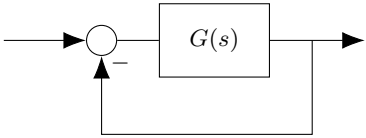
\begin{figure}
\centering
\begin{tikzpicture}
    % draw block
    \tikzstyle{block} = [draw, rectangle, minimum height=3em, minimum width=4.5em]
    \tikzstyle{sum} = [draw, circle, inner sep=1pt, minimum size=0.4cm]
    \node [block] (G) at (5.5,0){$G(s)$};
    \node [sum] (sum) at (4,0){};
    \draw[-] (sum) -- (G);
    \draw[-{Latex[length=3mm]}] (2.7,0) -- (sum);
    \draw[-{Latex[length=3mm]}] (G) -- (7.5,0);
    \draw[{Latex[length=3mm]}-] (sum) |- (6,-1.25) -| (6.8,0);
    \node at (4.25, -0.3) {$-$};
\end{tikzpicture}
\caption{Closed-loop system. The minus next to the circle indicates the subtraction of external input and output to implement negative feedback.}
\label{fig:closedloop}
\end{figure}

Adding feedback to a system changes the pole locations, which can have stabilizing or destabilizing effects. Even if the open-loop system (see Fig. \ref{fig:openloop}) is stable, feedback can introduce new dynamics that may destabilize the system. A closed-loop system with negative feedback (see Fig. \ref{fig:closedloop}) subtracts the feedback signal from the input signal. The transfer function of such a system is given by $T(s) = \frac{G(s)}{1+G(s)}$. 

If we add a feedback controller with transfer function $H(s)$, as in Fig. \ref{fig:closedloop2}, the closed-loop transfer function becomes $T(s) = \frac{G(s)}{1+G(s)H(s)}$.
Finding the poles of a closed-loop transfer function is more challenging than for an open-loop system, particularly in systems with multiple inputs and outputs.
Instead, the open-loop frequency response $G(s) H(s)$ can be used to determine whether the closed-loop system will be stable. The frequency response evaluates the steady-state behavior of a system by analyzing how it reacts to sinusoidal inputs across the frequency spectrum $s=i\omega$. When a sine wave is passed through a linear system, the long-term response will be also be a sine wave with the same frequency but possibly different amplitude and phase.  Initially, transient effects may occur; however, if the open-loop transfer function is stable (i.e., its poles lie in the stable region), these transients will decay over time, leaving the steady-state response.

The gain of a system, $\sigma(G(i\omega)) \coloneqq |G(i\omega)|$, describes how the system changes the amplitude of signals at a specific frequency, whereas the phase $\phi(G(i\omega)) \coloneqq \arg(G(i\omega))$ captures the phase shift introduced by the system to those signals. For real-valued linear time-invariant systems, analyzing the frequency response over $\omega \in [0, \infty]$ is sufficient, as the behavior at negative frequencies is symmetric (magnitude even and phase odd).

The Nyquist criterion can be used to give a condition for the stability of such a closed-loop system: Assume that $G(s)$ and $H(s)$ have no poles in the right half-plane, then if $1+G(s)H(s)$ has no zeros in the right half-plane, the closed loop transfer function is stable. By integrating along a contour encircling the right half-plane, and using the Cauchy argument principle, we find that $1 + G(s) H(s)$ can have no zeros in the right half-plane, if $G(s)H(s)$ does not encircle $-1$. A sufficient condition for no encirclement is that either the gain of $G(s)H(s)$ is smaller than $1$ or that its argument does not cross the negative imaginary axis: $-\pi < \arg(G(s) H(s)) < \pi$.

\begin{figure}
    \centering
    \begin{tikzpicture}
        % styles for blocks and sum
        \tikzstyle{block} = [draw, rectangle, minimum height=3em, minimum width=4.5em]
        \tikzstyle{sum} = [draw, circle, inner sep=1pt, minimum size=0.4cm]
        % blocks
        \node [block] (G) at (2,1) {$G(s)$};
        \node [block] (H) at (2,-0.75) {$H(s)$};
        \node [sum] (sum) at (0,1){};
        % arrows
        \draw[-{Latex[length=3mm]}] (-1.5,1) -- (sum);
        \draw[-{Latex[length=3mm]}] (H) -| (sum);
        \draw[-{Latex[length=3mm]}] (sum) -- (G);
        \draw[-{Latex[length=3mm]}] (4,1) |- (H);
        \draw[-{Latex[length=3mm]}] (G) -- (5,1);
        % labels
        \node at (0.3, 0.75) {$-$};
    \end{tikzpicture}  
    \caption{Closed loop-system with feedback controller.}
    \label{fig:closedloop2}
\end{figure}
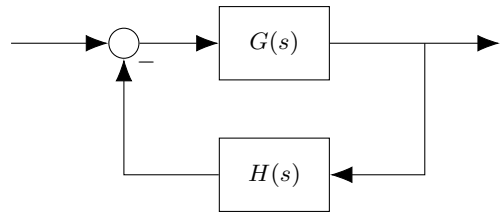

\subsection{Frequency response of MIMO LTI systems}
\label{subsec:response MIMO}

In multiple input multiple output systems, the transfer function is a matrix that describes the relationship between each input and each output. The central challenge in generalizing the arguments above to this case, is that we now must give conditions for the eigenvalues of $\bm G(s) \bm H(s)$ to not encircle $-1$. While bounding the magnitude of the eigenvalues of $\bm G(s) \bm H(s)$ in terms of the matrix norms of $\bm G(s)$ and $\bm H(s)$ is straightforward, bounding the phases of the eigenvalues in terms of properties of $\bm G(s)$ and  $\bm H(s)$ is not. This section defines the required vocabulary to state the theorems which we build upon.

\paragraph{Gain of a matrix:} We begin by introducing the appropriate complex matrix magnitudes and phases, following \citet{chen_phase_2024, zhao_when_2022}.
A matrix ${\bm M} \in \mathbb {C}^{n \times n}$ has $n$ magnitudes, defined as the $n$ singular values. The \textit{gain} is the maximum singular value and thus equal to the spectral norm.

\paragraph{Angular numerical range:} The \textit{numerical range} of a matrix $\bm M \in \mathbb {C}^{n \times n}$ is given by
\begin{align}
    W({\bm M}) = \left\{{\bm z}^\dagger {\bm M}{\bm z} ~|~ {\bm z}\in\mathbb{C}^N\, ,~ {\bm z}^\dagger {\bm z} = 1\right\}.
\end{align}
The numerical range is a compact and convex subset of $\mathbb {C}$, and contains the spectrum of ${\bm M}$. The \textit{angular numerical range} of ${\bm M}$ is defined as
\begin{align}
    W'({\bm M}) = \left\{{\bm z}^\dagger {\bm M}{\bm z} ~|~ {\bm z}\in\mathbb{C}^N\, ,~ {\bm z}^\dagger {\bm z} > 0\right\},
\end{align}
which is the conic hull of  $W({\bm M})$ and is always a convex cone (or the entire complex plane).

\paragraph{Sectorial matrices and their sector:} If $0$ is not in $W(\bm M)$, the matrix is \textit{sectorial}. If $0$ is on the boundary of $W(\bm M)$ it is \textit{semi-sectorial}. For a semi-sectorial matrix that is not identical to zero, $\arg(W'(\bm M))$ defines a sector of the circle of length $2 \Delta(\bm M) \leq \pi$. The phase for the midpoint of this sector is only defined modulo $2 \pi$, we typically choose the phase of the midpoint $\gamma(\bm M)$, such that $- \pi \leq \gamma(\bm M) < \pi$. The maximum and minimum phase of $\bm M$ are then defined as $\overline{\phi} = \gamma(\bm M) + \Delta(\bm M)$ and $\underline \phi(\bm M) = \gamma(\bm M) - \Delta(\bm M)$. Note that this means that $\overline\phi$ ($\underline \phi$) can be larger (smaller) than $\pi$ ($-\pi$).

\paragraph{Frequency-wise sectorial transfer operators:} A system $\bm G \in {\cal RH}_\infty^{m \times m}$ is said to be \textit{frequency-wise sectorial} if $\bm G(i\omega)$ is sectorial for all $\omega \in [-\infty, \infty]$.

\paragraph{Semi-stable frequency-wise semi-sectorial transfer operators:} Let $\bm G$ be an $m \times m$ real rational proper semi-stable system with no poles in the open right half-plane, and $i\Omega$ the set of poles on the imaginary axis. $\bm G$ is \textit{frequency-wise semi-sectorial} if
\begin{enumerate}
    \item $\bm G(i\omega)$ is semi-sectorial for all $\omega \in [-\infty, \infty] \setminus \Omega$; and
    \item there exists an $\epsilon^* > 0$ such that for all $\epsilon^+ \leq \epsilon^*$, $\bm G(s)$ has a constant rank and is semi-sectorial along the indented imaginary axis, where the half-circle detours to the right with radius $\epsilon$ are taken around both the poles and finite zeros of $\bm G(s)$ on the frequency axis and a half-circle detour with radius $1/\epsilon^+$ is taken if infinity is a zero of $\bm G(s)$.
\end{enumerate}

\paragraph{Frequency-wise gain response of a system:} With these definitions, we can now introduce the magnitude and phase response of MIMO LTI systems, following \citet{chen_phase_2024, zhao_when_2022}. Let $\bm G$ be an $m \times m$ real rational proper transfer matrix. Then $\sigma(\bm G(i\omega))$ is the vector of singular values of $\bm G(i\omega)$, which is called the magnitude response of $\bm G$. The gain at frequency $\omega$ is the largest singular value $\overline{\sigma}(\bm G(i\omega))$.

\paragraph{Frequency-wise phase response of a system:} 
Consider a frequency-wise sectorial (semi-sectorial) system $\bm G(s)$. At zero frequency, $\bm G(0)$ (or $\bm G(\epsilon^+)$) is real. 
For a real matrix $\bm M$ the angular field of values is symmetric on the real axis. Thus $\gamma(\bm M)$ is either $0$ or $\pi$ modulo $2 \pi$. We require that the phase center of $\bm G(0)$ (or $\bm G(\epsilon^+)$) can be chosen as $0$.
In the context of feedback, this means using the freedom to assign roots of 1 to the two transfer operators without loss of generality. This enables a practical representation that avoids the critical point and the branch cut at $\pi$ and $-\pi$.
If we choose phase centers for $\bm G(i\omega)$ such that $\gamma(\bm G(i\omega))$ is continuous in the frequency $\omega \in [-\infty, \infty]$ (or on the contour avoiding the poles). Then $[\underline{\phi}(\bm G(i\omega)), \overline{\phi}(\bm G(i\omega))]$ is the sector containing the phase response of $\bm G(s)$.

Note that the gain response is an even, the phase response an odd function of the frequency $\omega$. Small gain and phase theorems are often given in terms of conditions for only $\omega \in [0, \infty]$.

\subsection{Feedback stability in MIMO LTI systems}
\label{subsec:feedback stability MIMO LTI}
We can now state the main theorems we will adapt to the adaptive network context below.

Denote the identity matrix $\bm I$. The feedback system in Fig. \ref{fig:feedback_system}, denoted $\bm G \# \bm H$, is stable if the Gang of Four matrix
\begin{align}
\bm G \# \bm H = 
\begin{bmatrix}
(\bm I + \bm H \bm G)^{-1} & (\bm I + \bm H \bm G)^{-1} \bm H \\
\bm G (\bm I + \bm H \bm G)^{-1} & \bm G (\bm I + \bm H \bm G)^{-1} \bm H
\end{bmatrix}
\end{align}
is stable, i.e., $\bm G \# \bm H \in\mathcal{RH}_\infty$.

To obtain stability conditions, we need to make sure that $\bm I + \bm H(s) \bm G(s)$ has no zero eigenvalues in the right-hand side of the complex plane. This can be achieved by limiting either the gain \cite{zhou_essentials_1998, zames_input-output_1966} or the phase \cite{chen_phase_2024} response of $\bm G$ and $\bm H$:

\begin{figure}
    \centering
    \begin{tikzpicture}
        % styles for blocks and sum
        \tikzstyle{block} = [draw, rectangle, minimum height=3em, minimum width=4.7em]
        \tikzstyle{sum} = [draw, circle, inner sep=1pt, minimum size=0.4cm]
        % blocks
        \node [block] (G) at (2,1) {$\bm G(s)$};
        \node [block] (H) at (2,-0.75) {$\bm H(s)$};
        % sum circles
        \node [sum] (sum1) at (0,1) {};
        \node [sum] (sum2) at (4,-0.75) {}; 
        % arrows
        \draw[-{Latex[length=3mm]}] (-1.5,1) -- (sum1); 
        %node[pos=0.5, above]{$w_1$};
        \draw[-{Latex[length=3mm]}] (sum1) -- (G);
        %node[pos=0.5, above]{$u_1$};
        \draw[-{Latex[length=3mm]}] (G) -| (sum2);
        %node[pos=0.85, right]{$y_2$};
        \draw[-{Latex[length=3mm]}] (5.5,-0.75) -- (sum2); 
        %node[pos=0.5, below]{$w_2$}; 
        \draw[-{Latex[length=3mm]}] (sum2) -- (H); 
        %node[pos=0.5, below]{$u_2$};
        \draw[-{Latex[length=3mm]}] (H) -| (sum1); 
        %node[pos=0.85, left]{$y_1$};
        % labels
        \node at (0.3, 0.75) {$-$};
    \end{tikzpicture}
    \caption{Feedback system.}
    \label{fig:feedback_system}
\end{figure}
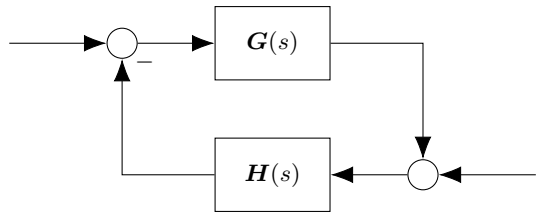

\begin{theorem}[Small Gain Theorem, \citet{zames_input-output_1966}, \citet{zhou_essentials_1998}]
Let $\bm G$ and $\bm H$ $\in\mathcal{RH}_\infty^{n \times n}$. Then the feedback system $\bm G\# \bm H$ is stable if
\begin{align}
\overline\sigma[\bm G(i\omega)]\overline\sigma[\bm H(i\omega)]<1,
\end{align}
for all $\omega \in [-\infty,\infty]$.
\end{theorem}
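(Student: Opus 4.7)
My plan is to reduce stability of the Gang of Four matrix to the single question of whether $(\bm I + \bm H \bm G)^{-1}$ belongs to $\mathcal{RH}_\infty$. Because $\bm G$ and $\bm H$ are themselves in $\mathcal{RH}_\infty$, and this set is closed under addition and multiplication, each of the four blocks in $\bm G \# \bm H$ is a product of elements of $\mathcal{RH}_\infty$ with $(\bm I + \bm H \bm G)^{-1}$. Thus once the latter factor is shown to be analytic and bounded on the closed right half-plane, the whole Gang of Four lies in $\mathcal{RH}_\infty$, which by definition is closed-loop stability.

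The central question is therefore whether $\bm I + \bm H(s) \bm G(s)$ fails to be invertible anywhere in $\overline{\mathbb{C}_+}$ (including $s = \infty$). On the imaginary axis, submultiplicativity of the spectral norm combined with the hypothesis gives
\[
\overline\sigma(\bm H(i\omega) \bm G(i\omega)) \leq \overline\sigma(\bm H(i\omega))\,\overline\sigma(\bm G(i\omega)) < 1
\]
for all $\omega \in [-\infty, \infty]$. Since both factors are continuous on the compact extended real line, the supremum is also strictly less than one, so $\|\bm H \bm G\|_\infty < 1$. I would then extend this bound to the interior of the right half-plane using the maximum modulus principle applied to the scalar holomorphic functions $s \mapsto \bm v^\dagger \bm H(s) \bm G(s) \bm u$ for unit vectors $\bm u, \bm v$; taking the supremum over $\bm u, \bm v$ recovers the spectral norm and yields $\overline\sigma(\bm H(s) \bm G(s)) < 1$ throughout $\overline{\mathbb{C}_+}$.

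Once this uniform bound is in place, the Neumann series $\sum_{k=0}^\infty (-\bm H(s) \bm G(s))^k$ converges absolutely and uniformly on $\overline{\mathbb{C}_+}$ and defines $(\bm I + \bm H(s) \bm G(s))^{-1}$ with the explicit bound $\overline\sigma((\bm I + \bm H \bm G)^{-1}) \leq (1 - \|\bm H \bm G\|_\infty)^{-1}$. Analyticity of this inverse on $\mathbb{C}_+$ is inherited from the analyticity of $\bm H \bm G$, and properness follows since $\bm G, \bm H$ are proper rational, so $\bm H(\infty) \bm G(\infty)$ is a well-defined matrix of norm less than one. Hence $(\bm I + \bm H \bm G)^{-1} \in \mathcal{RH}_\infty$, and by the reduction above the entire Gang of Four is in $\mathcal{RH}_\infty$.

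The main obstacle I anticipate is the passage from the imaginary-axis bound to a bound on the full closed right half-plane. The spectral norm is not holomorphic, so the maximum modulus principle cannot be applied to it directly; the scalar-function workaround sketched above is clean but requires a short argument that the supremum of holomorphic scalar functions on the boundary controls the same supremum in the interior. Treating $s = \infty$ is a minor bookkeeping item, handled uniformly by the properness of $\bm G$ and $\bm H$ and the fact that the statement explicitly includes $\omega = \pm\infty$.
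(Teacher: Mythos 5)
The paper itself does not prove this theorem; it is quoted from \citet{zames_input-output_1966} and \citet{zhou_essentials_1998}, and the only in-paper justification is the informal Nyquist-style sketch in Sec.~III.A (the eigenloci of $\bm G(s)\bm H(s)$ have modulus below one, hence cannot encircle $-1$). Your argument is a correct rendition of the standard direct proof: reduce stability of the Gang of Four to $(\bm I+\bm H\bm G)^{-1}\in\mathcal{RH}_\infty$ using closure of $\mathcal{RH}_\infty$ under sums and products; get a strict bound $\sup_{\omega}\overline\sigma(\bm H(i\omega)\bm G(i\omega))<1$ from submultiplicativity plus continuity on the compact extended axis (continuity at $\omega=\pm\infty$ holding because $\bm G,\bm H$ are proper and stable); push the bound into the closed right half-plane via maximum modulus applied to the scalar functions $s\mapsto \bm v^\dagger\bm H(s)\bm G(s)\bm u$ (a bilinear map of the half-plane to the disk, with properness controlling the point at infinity, makes this rigorous); and conclude invertibility, boundedness, analyticity and properness of $(\bm I+\bm H\bm G)^{-1}$ via the Neumann series and invertibility of $\bm I+\bm H(\infty)\bm G(\infty)$, rationality being automatic from Cramer's rule since $\det(\bm I+\bm H\bm G)$ is nonvanishing on the closed right half-plane. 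Compared with the Nyquist-based sketch the paper gestures at, your route avoids counting encirclements of eigenloci and yields the explicit bound $\overline\sigma\bigl((\bm I+\bm H\bm G)^{-1}\bigr)\le (1-\|\bm H\bm G\|_\infty)^{-1}$, while the Nyquist viewpoint is the one that extends naturally to the phase and mixed gain--phase theorems used later in the paper. I see no gap; the two points you flag (strictness of the supremum, and the treatment of $s=\infty$) are indeed the only delicate spots and are handled correctly.
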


%\begin{comment}
\begin{theorem}[Generalized Small Phase Theorem, \citet{chen_phase_2024}]
    Let $\bm G$ be semi-stable frequency-wise semi-sectorial with $i\Omega$ being the set of poles on the imaginary axis and $\bm H \in\mathcal{RH}_\infty^{n \times n}$ be frequency-wise sectorial. Then $\bm G\# \bm H$ is stable if
    \begin{align}
    \overline{\phi}(\bm G(i\omega)) + \overline\phi(\bm H(i\omega)) &< \pi,\\
    \underline{\phi}(\bm G(i\omega)) + \underline\phi(\bm H(i\omega)) &> -\pi.
    \end{align}
    for all $\omega \in [0,\infty]\backslash\Omega$.
\end{theorem}

Both conditions can be combined. The mixed gain-phase theorem with cut-off frequency developed by \citet{zhao_when_2022} provides stability results for feedback systems that satisfy a phase condition at low frequencies and a gain condition at high frequencies.

\begin{theorem}[Mixed Gain-Phase Theorem with Cut-off Frequency, \citet{zhao_when_2022}]
\label{thm:mixed gain phase cutoff}
    Let $\omega_c \in (0, \infty)$, $\bm G$ be semi-stable frequency-wise semi-sectorial over $(-\omega_c, \omega_c)$ with $i\Omega$ being the set of poles on the imaginary axis satisfying $\max_{\omega\in\Omega}|\omega|<\omega_c$, and $\bm H \in\mathcal{RH}_\infty^{n \times n }$ be frequency-wise sectorial. Then $\bm G\# \bm H$ is stable if\\
    i) for each $\omega \in [0, \omega_c) \backslash \Omega$, it holds 
    \begin{align}
    \label{eq:mixed gain phase max phases condition}
    \overline{\phi}(\bm G(i\omega)) + \overline\phi(\bm H(i\omega)) &< \pi,\\
    \label{eq:mixed gain phase min phases condition}
    \underline{\phi}(\bm G(i\omega)) + \underline\phi(\bm H(i\omega)) &> - \pi,
    \end{align}
    ii) and for each $\omega \in [\omega_c, \infty]$, it holds 
    \begin{align}
    \label{eq:mixed gain phase, gain condition}
    \overline{\sigma}(\bm G(i\omega)) \overline\sigma(\bm H(i\omega)) < 1.
    \end{align}
\end{theorem}

\citet{woolcock_mixed_2023} give a further generalized version, and a proof that proceeds by the Nyquist criterion, but does not treat the semi-stable case, which we need in the following.

\section{Stability of Adaptive Networks}
\label{sec:stability in adaptive networks}

We now come to the main result of the paper: An adaptation of the mixed gain and phase condition to the case of adaptive dynamical networks.

We will first show in detail how such systems can be written as interconnected dynamical systems. Then we will give the main theorems.

\subsection{Adaptive networks as feedback systems}
In this section, we show that a broad class of adaptive dynamical systems can be cast into the form of a feedback system.
In the neighborhood of a steady state, the subsystems in the feedback loop are represented by two transfer operators.

Let the $N$ nodes of the network be indexed $n$ and $m$, with $1 \leq n,m \leq N$.
The $L$ edges %in the set of edges $\mathcal{E}$
are indexed by ordered pairs $l=(n,m)$, $n < m$. For any nodal quantity $\bm x^v_n$, we denote $\bm x^v$ the overall vector obtained by stacking the $\bm x^v_n$, similarly for $\bm x^e_l$. Note that, unless otherwise stated, the dimension of $\bm x^e_l$ and $\bm x^v_n$ can vary from edge to edge and node to node. We write $[\bm x]$ for the diagonal matrix with the entries of $\bm x$ on the diagonal.

Consider a general bipartite dynamical system associated to a graph with node variables $\bm{x^v}$, and edge variables $\bm{x^e}$. As noted, the nodes and edges can be heterogeneous, e.g., the states $\bm{x^v}_1$ at node $1$, can be a vector of different size and follow different dynamics than $\bm{x^v}_2$ at node $2$. However, we require that the coupling on the network occurs with respect to some observables $\bm{o^v} $ and $\bm{{o'}^e}$ that have the same dimension for all edges and nodes. The nodes and edges %of the linearized form 
of the system are coupled with their respective inputs and outputs by a node-edge incidence matrix $\bm B$ that encodes the graph structure:
\begin{align}
    \label{eq:coupled_system_1}
    \bm{\dot{x}^v} &= \bm{f^{v}}(\bm{x^v}, \bm B \bm{{o'}^e}),\\
    \label{eq:coupled_system_2}
    \bm{o^v} &\coloneqq \bm{g^{v}}( \bm{x^v}, \bm B \bm{{o'}^e}
    ), \\
    \label{eq:coupled_system_3}
    \bm{\dot{x}^e} &= \bm{f^{e}}(\bm{x^e}, \bm B^\dagger \bm{o^v}), \\
    \label{eq:coupled_system_4}
    \bm{{o'}^e} &\coloneqq \bm{g^{e}}(\bm{x^e}, \bm B^\dagger \bm{o^v}),
\end{align}
where $\bm f$ and $\bm g$ act node-wise or edge-wise, that is
\begin{align}
    \bm{\dot{x}^v}_n = \bm{f^v}_n(\bm{x^v}_n, \bm B \bm{{o'}^e}\vert_n) \qquad \text{etc.}
\end{align}
With $\bm{o^e} \coloneqq \bm B \bm{{o'}^e}$, we have
\begin{align}
    \bm{\dot{x}^v} &= \bm{f^{v}}(\bm{x^v},\bm{o^e}),\\
    \bm{o^v} &\coloneqq \bm{g^{v}}( \bm{x^v},\bm{o^e}
    ), \\
    \bm{\dot{x}^e} &= \bm{f^{e}}(\bm{x^e},\bm B^\dagger \bm{o^v}), \\
    \bm{o^e} &\coloneqq \bm B \bm{g^{e}}(\bm{x^e},\bm B^\dagger \bm{o^v}).
\end{align}
Here, each component of $\bm{f^{v}}$ and $\bm{f^{e}}$ contains the local dynamics of the $\bm{x^v}_n$ and $\bm{x^e}_l$, and also the coupling to the output of the other half of the bipartite system.
Similarly, each component of $\bm{g^{v}}$ and $\bm{g^{e}}$ contains the local output functions of $\bm{x^v}_n$ and $\bm{x^e}_l$. 
Without loss of generality, we assume the fixed point we want to study is at the origin $\bm x^* = 0$. The linearized system has the form
\begin{align}
\bm{\dot{x}^v} &= \bm{J^{vv}} \bm{x^v} + \bm{J^{ve}} \bm{o^e},\\
\bm{o^v} &\coloneqq \bm{D^{vv}} \bm{x^v} + \bm{D^{ve}} \bm{o^e}
, \\
\bm{\dot{x}^e} &= \bm{J^{ee}} \bm{x^e} + \bm{J^{ev}} \bm B^\dagger \bm{o^v}, \\
\bm{o^e} &\coloneqq \bm B \bm{D^{ee}} \bm{x^e} + \bm B\bm{D^{ev}} \bm B^\dagger \bm{o^v},
\end{align}
where the $\bm{J^{\bullet\bullet}}$ are Jacobians of the self-coupling of $\bm{x^v}$ and $\bm{x^e}$, and the input matrices. The $\bm{D^{\bullet\bullet}}$ are feedthrough and output matrices of according dimensions.
In the most common notions of adaptive networks, we have $\bm{D^{ve}} = \bm 0$. Otherwise, this system might correspond to a differential algebraic equation, rather than a differential equation.

As the dynamics of $\bm f$ and $\bm g$ was acting locally at the nodes, the square matrices $\bm{J^{\bullet\bullet}}$ and $\bm{D^{\bullet\bullet}}$ are block diagonal:
\begin{align}
    \bm{J^{v\bullet}} &= \bigoplus_n \bm{J^{v\bullet}}_n,
    \quad \bm J^{e\bullet} = \bigoplus_l \bm{J^{e\bullet}}_l,
\end{align}
and for the $\bm{D^{\bullet\bullet}}$ accordingly. $\bigoplus$ denotes the direct sum over nodes or edges.

We now go to Laplace space and obtain the transfer operators of the system:
\begin{align}
\bm{o^v} &= \left(\bm{D^{vv}} (s - \bm{J^{vv}})^{-1} \bm{J^{ve}} %+ \bm{D^{ve}}
\right)\bm{o^e}\\
&\coloneqq \bm{T^{ve}}(s) \bm{o^e}\\
\bm{o^e} &= \bm B \left(\bm{D^{ee}} (s - \bm{J^{ee}})^{-1} \bm{J^{ev}} + \bm{D^{ev}} \right) \bm B^\dagger {\bm  o^v}\\
&\coloneqq -\bm B \bm{T^{ev}}(s) \bm B^\dagger \bm{o^v}.
\end{align}

If $\bm B^\dagger$ has a kernel, there are directions in phase space that are not visible to the network interactions. Note that in this situation, the stability results are with respect to outputs that are orthogonal to this zero mode.

The inner transfer operators of these systems are block diagonal: they can be written as the direct sum of node-wise or edge-wise transfer operators,
\begin{align}
    \bm{T^{ve}} &= \bigoplus_n \bm{T^{ve}}_n,
    &\bm{T^{ev}} &= \bigoplus_l \bm{T^{ev}}_l.
\end{align}
One sees immediately that this system class has the proper form to apply proposition~\ref{thm:small_phase_theorem_block}.
However, this theorem gives us global conditions.
In the following, we derive a version that leverages the block structure to provide local conditions.

\subsection{Phases and gains for networked systems}
The proofs we give below rely on the Lemmas presented in a companion paper \cite{niehues_small-signal_2024}. Here we just collect the crucial properties of the phases and the gain that make them particularly useful for studying systems with a network structure encoded in an incidence-like matrix $\bm B$.

\paragraph{Connectivity does not enlarge the phase response:} The angular field of values of the transformed matrix ${\bm B}{\bm M}{\bm B}^\dagger$ is contained within the union of the angular field of values of the matrix $\bm M$ with $0$
\begin{align}
\label{eq:sub-field}
    W'({\bm B}{\bm M}{\bm B}^\dagger) \subseteq \left( W'({\bm M}) \cup 0\right)\; ,
\end{align}
and thus the phase sector of $\bm B \bm M \bm B^\dagger$ is contained in that of $\bm M$:
\begin{align}
    \overline{\phi}({\bm B}{\bm M}{\bm B}^\dagger) &\leq \overline{\phi}({\bm M}), & 
    \underline{\phi}({\bm B}{\bm M}{\bm B}^\dagger) &\geq \underline{\phi}({\bm M}).
\end{align}

\paragraph{Composition does not enlarge the phase response:}
For a block diagonal system ${\bm T} = \bigoplus_n{\bm T}_n$, the numerical range is the convex hull of the numerical ranges of the blocks
\begin{align}\label{eq:conv-hull}
    W({\bm T}) &= {\rm Conv}\left(W({\bm T}_1),...,W({\bm T}_N)\right).
\end{align}
Therefore, 
\begin{align}\label{eq:phase bounds block matrix}
    \overline{\phi}({\bm T}) &= \max_n\overline{\phi}({\bm T}_n)\, , & 
    \underline{\phi}({\bm T}) &= \min_n\underline{\phi}({\bm T}_n).
\end{align}

\paragraph{The gain increase from connectivity is bounded:} The spectral norm is submultiplicative \cite{horn_matrix_2012}, so we can give an upper bound for the spectral norm of the transformed matrix ${\bm B}{\bm M}{\bm B}^\dagger$:
\begin{align}
\label{eq:gain_BB}
\overline{\sigma}({\bm B}{\bm M}{\bm B}^\dagger) &\leq \overline{\sigma}({\bm B^{\dagger}})\overline{\sigma}({\bm B})\overline{\sigma}({\bm M}) = \overline{\sigma}({\bm B})^2 \overline{\sigma}({\bm M}).
\end{align}

Defining the unweighted Laplacian matrix $\bm L$ of the interconnection matrix $\bm B$ by ${\bm L} = {\bm B} {\bm B}^{\dagger}$ and noting that $\overline\sigma(\bm B)^2 = \overline\sigma(\bm L)$, we get
\begin{align}
\label{eq:gain_laplacian_bb}
\overline{\sigma}({\bm B}{\bm M}{\bm B}^\dagger) &\leq \overline{\sigma}({\bm L})\overline{\sigma}({\bm M}).
\end{align}
The largest singular value of the Laplacian $\overline\sigma (\bm L)$ is equal to the largest eigenvalue and bounded by twice the maximal degree.

\paragraph{Composition does not increase the gain:} The spectral norm of a block diagonal system ${\bm T} = \bigoplus_i{\bm T}_i$, is the maximum of the spectral norms of its diagonal blocks:
\begin{align}\label{eq:gain_blocks}
    \overline{\sigma}({\bm T}) &= \max_i  \overline{\sigma}(\bm T_i).
\end{align}
\subsection{Small phase and small gain for stability in adaptive networks}
We are now ready to give the main results of this paper, a theorem that guarantees the stability of adaptive dynamical systems using gain and phase information. This generalizes the Small Phase Theorem with Block Structure given in the companion paper \cite{niehues_small-signal_2024}.

\begin{theorem}[Generalized Small Phase Theorem with Block Structure\cite{niehues_small-signal_2024}]
\label{thm:small_phase_theorem_block}
    Consider the system $\bm G\# \bm H$ with the block structure $\bm H= \bigoplus_{n=1}^N{\bm T}_n(s)$ and $\bm G = {\bm B}\bigoplus_{l=1}^{L}{\bm T}_l(s){\bm B}^\dagger$ for some $\bm B$ of appropriate dimensions.
    For each $n$, let ${\bm T}_n(s)\in{\cal RH}_\infty$ be frequency-wise sectorial. 
    For each $l$, let ${\bm T}_l(s)$ be semi-stable frequency-wise semi-sectorial, with $i\Omega$ being the union of the set of poles on the imaginary axis. Require that $\bm H(s)$ has constant rank on the indented imaginary axis. 
    Then, the interconnected system $\bm G \# \bm H$ is stable if 
    \begin{align}%\label{eq:assumption2_block}
    \label{eq:small phase theorem, block structure version, sectoriality condition G}
        \max_n\overline{\phi}\left({\bm T}_n(i\omega)\right) - \min_n\underline{\phi}\left({\bm T}_n(i\omega)\right) &< \pi\, , \\ 
        \max_l\overline{\phi}\left({\bm T}_l(i\omega)\right) - \min_l\underline{\phi}\left({\bm T}_l(i\omega)\right) &\leq \pi\, ,
        \label{eq:small phase theorem, block structure version, sectoriality condition H}
    \end{align}
    for all $\omega\notin\Omega$, and 
    \begin{align}%\label{eq:assumption3_block}
    \label{eq:small phase theorem, block structure version, stability condition sup}
        \sup_{n,l,\omega\notin\Omega}\left[\overline{\phi}\left({\bm T}_n(i\omega)\right) + \overline{\phi}\left({\bm T}_l(i\omega)\right)\right] &< \pi\, , \\ 
        \inf_{n,l,\omega\notin\Omega}\left[\underline{\phi}\left({\bm T}_n(i\omega)\right) + \underline{\phi}\left({\bm T}_l(i\omega)\right)\right] &> -\pi\, .
        \label{eq:small phase theorem, block structure version, stability condition inf}
    \end{align}
\end{theorem}
%\end{comment}

We now proceed to our main result. In analogy to the Mixed Gain-Phase Theorem with Cut-off Frequency as a combination of the Small Gain and Small Phase Theorems, we can prove a mixed theorem for systems with block structure.

\begin{pro}[Mixed Gain-Phase Theorem with Cut-off Frequency and Block Structure]
\label{thm:mixed gain phase cutoff, block}
    Consider the system $\bm G\# \bm H$ with the block structure $\bm H = \bigoplus_{n=1}^N{\bm T}_n(s)$ and $\bm G = {\bm B}\bigoplus_{l=1}^{L}{\bm T}_l(s){\bm B}^\dagger$ for some $\bm B$ of appropriate dimensions.
    Let $\omega_c \in (0, \infty)$. 
    For each $n$, let ${\bm T}_n(s)\in{\cal RH}_\infty$ be frequency-wise sectorial. 
    For each $l$, let ${\bm T}_l(s)$ be semi-stable frequency-wise semi-sectorial over $(-\omega_c, \omega_c)$, with $i\Omega$ being the union of the set of poles on the imaginary axis satisfying $\max_{\omega\in\Omega}|\omega|<\omega_c$. Further require that ${\bm T}_l(s)$ have full rank along the indented imaginary axis.
    Then, the interconnected system $\bm G \# \bm H$ is stable if\\
    \begin{align}%\label{eq:assumption2_block}
    \label{eq:mixed gain phase, block, sectoriality H}
        \max_n\overline{\phi}\left({\bm T}_n(i\omega)\right) - \min_n\underline{\phi}\left({\bm T}_n(i\omega)\right) &< \pi\, ,
    \end{align}
    for all $\omega\notin\Omega$, and
    \begin{align}
    \label{eq:mixed gain phase, block, semi-sectoriality cutoff G}
        \max_l\overline{\phi}\left({\bm T}_l(i\omega)\right) - \min_l\underline{\phi}\left({\bm T}_l(i\omega)\right) &< \pi\, ,
    \end{align}
    for $\omega \in (-\omega_c, \omega_c)\backslash\Omega$, and\\
    i) for each $\omega \in [0, \omega_c) \backslash \Omega$, it holds 
    \begin{align}
    \label{eq:mixed gain phase, block, phase condition sup}
        \sup_{n,l}\left[\overline{\phi}\left({\bm T}_n(i\omega)\right) + \overline{\phi}\left({\bm T}_l(i\omega)\right)\right] &< \pi\, , \\ 
        \label{eq:mixed gain phase, block, phase condition inf}
        \inf_{n,l}\left[\underline{\phi}\left({\bm T}_n(i\omega)\right) + \underline{\phi}\left({\bm T}_l(i\omega)\right)\right] &> -\pi\, .
    \end{align}
    ii) and for each $\omega \in [\omega_c, \infty]$, it holds 
    \begin{align}
    \label{eq:mixed gain phase block, gain condition}
    \sup_{n,l}\left[\overline{\sigma}(\bm B)^2\overline{\sigma}({\bm T}_n(i\omega)) \overline\sigma({\bm T}_l(i\omega))\right] < 1.
    \end{align}
\end{pro}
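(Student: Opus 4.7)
The strategy is to reduce the proposition to the non-block Mixed Gain-Phase Theorem with Cut-off Frequency (Theorem \ref{thm:mixed gain phase cutoff}) applied to the aggregated transfer operators $\bm G(s)$ and $\bm H(s)$. All the work then amounts to showing that each hypothesis of that theorem, stated globally for $\bm G$ and $\bm H$, is implied by the block-wise hypotheses of the proposition via the four properties of the phase and gain responses collected in the preceding subsection, namely (\ref{eq:sub-field}), (\ref{eq:phase bounds block matrix}), (\ref{eq:gain_BB}), and (\ref{eq:gain_blocks}).

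First I would check the structural hypotheses. Since $\bm H = \bigoplus_n \bm T_n(s)$ with each $\bm T_n \in \mathcal{RH}_\infty$, we have $\bm H \in \mathcal{RH}_\infty^{m\times m}$, and by (\ref{eq:phase bounds block matrix}) its phase sector at frequency $\omega$ has width $\max_n \overline\phi(\bm T_n(i\omega)) - \min_n \underline\phi(\bm T_n(i\omega))$, which is strictly less than $\pi$ by (\ref{eq:mixed gain phase, block, sectoriality H}); hence $\bm H$ is frequency-wise sectorial. For the feedback side, the inner block $\bigoplus_l \bm T_l(s)$ is semi-stable with poles contained in $i\Omega$, and on $(-\omega_c,\omega_c)\setminus\Omega$ its phase sector width is at most $\max_l\overline\phi(\bm T_l) - \min_l\underline\phi(\bm T_l) < \pi$ by (\ref{eq:mixed gain phase, block, semi-sectoriality cutoff G}), so it is semi-sectorial there. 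Property (\ref{eq:sub-field}) then transfers semi-sectoriality to $\bm G = \bm B \bigl(\bigoplus_l \bm T_l\bigr)\bm B^\dagger$, since $W'(\bm G(i\omega)) \subseteq W'(\bigoplus_l \bm T_l(i\omega)) \cup \{0\}$. The required constant-rank condition along the indented imaginary axis follows because each $\bm T_l$ has full rank there by hypothesis, making $\bigoplus_l \bm T_l$ full rank, and for full-rank $\bm M$ the rank of $\bm B \bm M \bm B^\dagger$ equals $\mathrm{rank}(\bm B)$, a constant.

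Next I would translate the numerical conditions. Using (\ref{eq:phase bounds block matrix}) for $\bm H$ and the combination of (\ref{eq:sub-field}) with (\ref{eq:phase bounds block matrix}) for $\bm G$, we get the bounds
\begin{align*}
\overline\phi(\bm G(i\omega)) + \overline\phi(\bm H(i\omega)) &\leq \max_l \overline\phi(\bm T_l(i\omega)) + \max_n \overline\phi(\bm T_n(i\omega)) \\
&\leq \sup_{n,l}\bigl[\overline\phi(\bm T_n(i\omega)) + \overline\phi(\bm T_l(i\omega))\bigr],
\end{align*}
and analogously a lower bound on $\underline\phi(\bm G(i\omega)) + \underline\phi(\bm H(i\omega))$. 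Conditions (\ref{eq:mixed gain phase, block, phase condition sup})--(\ref{eq:mixed gain phase, block, phase condition inf}) therefore imply (\ref{eq:mixed gain phase max phases condition})--(\ref{eq:mixed gain phase min phases condition}) of Theorem \ref{thm:mixed gain phase cutoff} on $[0,\omega_c)\setminus\Omega$. For the gain side on $[\omega_c,\infty]$, combining (\ref{eq:gain_BB}) and (\ref{eq:gain_blocks}) yields
\begin{equation*}
\overline\sigma(\bm G(i\omega))\,\overline\sigma(\bm H(i\omega)) \leq \overline\sigma(\bm B)^2 \max_l \overline\sigma(\bm T_l(i\omega)) \max_n \overline\sigma(\bm T_n(i\omega)),
\end{equation*}
which is bounded by $\sup_{n,l}[\overline\sigma(\bm B)^2\overline\sigma(\bm T_n)\overline\sigma(\bm T_l)] < 1$ via (\ref{eq:mixed gain phase block, gain condition}), matching (\ref{eq:mixed gain phase, gain condition}). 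Invoking Theorem \ref{thm:mixed gain phase cutoff} on $\bm G \# \bm H$ then closes the argument.

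The main subtlety I expect concerns the bookkeeping of phase centers and the semi-sectorial regularity along the indented contour, rather than the inequalities themselves. Specifically, one must choose the $\gamma(\bm T_n(i\omega))$ and $\gamma(\bm T_l(i\omega))$ such that the phase centers of $\bm H(i\omega)$ and $\bm G(i\omega)$ are both continuous in $\omega$ and compatible with the convention that their values at $s=0$ (or $s=\epsilon^+$) are $0$; the arguments above implicitly rely on this common choice being available, which is guaranteed by the sectoriality width conditions (\ref{eq:mixed gain phase, block, sectoriality H}) and (\ref{eq:mixed gain phase, block, semi-sectoriality cutoff G}) being strict. Beyond that, the proof is essentially a bookkeeping exercise that lifts the componentwise phase and gain data, together with the graph-incidence bounds, to the global hypotheses of Theorem \ref{thm:mixed gain phase cutoff}.
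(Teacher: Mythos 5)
Your overall route is the same as the paper's: reduce to Theorem~\ref{thm:mixed gain phase cutoff} for the aggregated operators $\bm G$ and $\bm H$, using the block phase property \eqref{eq:phase bounds block matrix}, the congruence property \eqref{eq:sub-field}, and the gain bounds \eqref{eq:gain_BB}, \eqref{eq:gain_blocks} to lift the local hypotheses to global ones. The phase- and gain-condition bookkeeping in your second paragraph is exactly the paper's argument.

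However, there is a genuine gap in your verification of the constant-rank requirement along the indented imaginary axis. You assert that ``for full-rank $\bm M$ the rank of $\bm B \bm M \bm B^\dagger$ equals $\mathrm{rank}(\bm B)$.'' This is false in general: take $\bm B = \pmat{1 & 0}$ and the full-rank matrix $\bm M = \pmat{0 & 1 \\ 1 & 0}$; then $\bm B \bm M \bm B^\dagger = 0$, whereas $\mathrm{rank}(\bm B)=1$. Full rank of the inner block alone does not control the rank of the congruence-like product $\bm B \bm M \bm B^\dagger$ when $\bm B$ is rectangular and rank deficient in one direction. The paper closes this hole with the appendix Lemma~\ref{lem:const rank}, whose proof uses not only the full-rank assumption but also the joint phase-width condition \eqref{eq:mixed gain phase, block, semi-sectoriality cutoff G}: since all blocks $\bm T_l(s)$ then have their numerical ranges in a common open half-plane, for any $\bm z$ with $\bm B^\dagger \bm z \neq 0$ one has
\begin{align}
\bm z^\dagger \bm B \left(\bigoplus_l \bm T_l(s)\right) \bm B^\dagger \bm z \;=\; (\bm B^\dagger \bm z)^\dagger \left(\bigoplus_l \bm T_l(s)\right) (\bm B^\dagger \bm z) \;\neq\; 0\,,
\end{align}
so $\ker\bigl(\bm B \bigoplus_l \bm T_l(s) \bm B^\dagger\bigr) = \ker(\bm B^\dagger)$, which is independent of $s$ and hence gives constant rank. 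Your proof already has the needed phase-width hypothesis available (you invoke it for semi-sectoriality), so the step is repairable, but as written the justification would fail; you must route the constant-rank claim through this half-plane/quadratic-form argument rather than through a purely rank-theoretic identity.
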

\textit{Remark:}
$\bm H$ is stable, and its sectoriality is ensured by \eqref{eq:mixed gain phase, block, sectoriality H}. $\bm G$ is semi-stable, and its semi-sectoriality over $(-\omega_c, \omega_c)$ is ensured by \eqref{eq:mixed gain phase, block, semi-sectoriality cutoff G}. Note that the full rank condition, together with semi-sectoriality, implies that $\bm T_l(s)$ are actually sectorial. Due to the kernel of $\bm B^\dagger$, $\bm G$ will still only be semi-sectorial.
Equations~\eqref{eq:mixed gain phase, block, phase condition sup}-\eqref{eq:mixed gain phase, block, phase condition inf} imply the phase conditions of Theorem~\ref{thm:mixed gain phase cutoff}, and \eqref{eq:mixed gain phase block, gain condition} implies the gain condition \eqref{eq:mixed gain phase, gain condition}.

\begin{proof}
    We prove this proposition by showing that all assumptions and conditions allow us to apply Theorem \ref{thm:mixed gain phase cutoff}.
    In the appendix, we prove (Lemma \ref{lem:const rank}) that the condition that $\bm T_l(s)$ has full rank, together with \eqref{eq:mixed gain phase, block, semi-sectoriality cutoff G} implies that the kernel of $\bm G(s)$ is the kernel of $\bm B^\dagger$, and thus the rank of $\bm G(s)$ is constant.
    Leveraging lemmas from the companion paper \cite{niehues_small-signal_2024}, we have
    \begin{itemize}
        \item $\bm H$ is stable by Lemma 8,
        \item $\bm H$ is frequency-wise sectorial if \eqref{eq:mixed gain phase, block, sectoriality H} holds by Lemma 9,
        \item $\bm G$ is semi-stable by Lemma 10, and
     \item $\bm G$ is frequency-wise semi-sectorial over $(-\omega_c,\omega_c)$ if \eqref{eq:mixed gain phase, block, semi-sectoriality cutoff G} holds by Lemma 11. As the proof of this Lemma goes frequency-wise over $\omega\in\mathbb{R}\setminus\Omega$, it also applies to any interval; this includes $(-\omega_c, \omega_c)$.
    \end{itemize} 
    Using one more time the convex hull property \eqref{eq:conv-hull}, in particular \eqref{eq:phase bounds block matrix}, and the subset property \eqref{eq:sub-field}, the assumptions \eqref{eq:mixed gain phase, block, phase condition sup}-\eqref{eq:mixed gain phase, block, phase condition inf}
imply
\begin{align}
    \sup_{\omega\in[0,\omega_c)\setminus\Omega}\left[\overline{\phi}\left(\bigoplus_n{\bm T}_n\right) + \overline{\phi}\left({\bm B}\bigoplus_l{\bm T}_l{\bm B}^\dagger\right)\right] &< \pi , \\
    \inf_{\omega\in[0,\omega_c)\setminus\Omega}\left[\underline{\phi}\left(\bigoplus_n{\bm T}_n\right) + \underline{\phi}\left({\bm B}
    \bigoplus_l{\bm T}_l{\bm B}^\dagger\right)\right] &> -\pi,
\end{align}
where $\bm T_n$ and $\bm T_l$ are functions of $i \omega$.
These are the phase conditions \eqref{eq:mixed gain phase max phases condition}, \eqref{eq:mixed gain phase min phases condition} of Theorem \ref{thm:mixed gain phase cutoff}.

By \eqref{eq:gain_BB}
and \eqref{eq:gain_blocks}, we have 
that \eqref{eq:mixed gain phase block, gain condition} implies
\eqref{eq:mixed gain phase, gain condition}.
All in all, the system $\left(\bigoplus_n{\bm T}_n\right)\#\left({\bm B}\bigoplus_l{\bm T}_l{\bm B}^\dagger\right)$ then satisfies all assumptions and conditions of Theorem~\ref{thm:mixed gain phase cutoff} and is therefore stable, which concludes the proof.
\end{proof}

\section{Stability of Kuramoto-type adaptive systems}
\label{sec:applications}

We now use Theorem \ref{thm:mixed gain phase cutoff, block}, to derive sufficient criteria for the stability of several Kuramoto-type systems. We demonstrate that this theorem can give novel insights, even for well studied paradigmatic models.

The first example is the classical Kuramoto model. We recover standard results and show that our theorem includes heterogeneous parameters and complex topologies, as well as complexified states and parameters.
The second example is the reformulation of the Kuramoto model with inertia as an adaptive system. Here, we recover established sufficient stability conditions in a novel way. The third system is a proper adaptive Kuramoto model. Here we obtain new sufficient stability conditions that match the necessary conditions of \citet{do_topological_2016}, but also generalize to heterogeneous parameters.

Recall that we denote $[\bm x] = \text{diag}(x_i)$.

\subsection{The classical Kuramoto model and its generalizations}
Our results also apply to the subclass of non-adaptive systems. 
Picking up the motivating example of \ref{sec:high level sketch}, consider the classical Kuramoto model \cite{kuramoto_self-entrainment_1975} of $N$ oscillators with phases $x_n$ and natural frequencies $f_n$, but allow a complex network topology and heterogeneous symmetric coupling weights $K_{nm}$:
\begin{align}
    \dot x_n = f_n - \sum_m K_{nm} \sin(x_n - x_m).
\end{align}
Without loss of generality, we assume $\sum_n f_n = 0$.
Omitting technical details, the linearized system can be represented by the transfer operators $\frac{1}{s} \bm L \# \bm I$, with the Laplacian $\bm L \coloneqq \bm B \bm K \left[\cos\Delta \bm x^\circ\right] \bm B^\dagger$, where $\bm B$ is the node-edge incidence matrix, $\bm K = \text{diag}\left(K_l \right)$ and $\Delta x_l = x_n - x_m$, both for lexicographical edges $l=(n,m)$, $n<m$.
The steady state fulfills $f_n = \sum_m K_{nm} \sin(x_n^\circ - x_m^\circ)$. Applying our method, we find it is stable if $K_l \cos\Delta x_l^\circ > 0$, implying phase differences $\vert\Delta x_l^\circ\vert< \pi/2$ for all edges $l = (n,m)$, a classical result that we get straightforward.

Furthermore, our method almost immediately applies \footnote{To see this, consider the stacked system of complexified $\bm x$ and the complex conjugate $\bm x^*$, which is unitarily equivalent to a real system of stacked $\Re \bm x$ and $\Im \bm x$.} to the complexified Kuramoto model introduced by \citet{thumler_synchrony_2023,lee_complexified_2024}:
The complexified system is stable if $\Re\left(K_l \cos\Delta x_l^\circ\right) > 0$.

\subsection{The Kuramoto model with inertia}
As a first detailed example, we study the Kuramoto model with inertia \cite{bergen_structure_1981,filatrella_analysis_2008} in the most general case of heterogeneous parameters,
\begin{align}\label{eq:Kuramoto_second_order}
    m_n \Ddot{x}_n + \gamma_n \dot{x}_n = P_n  - \sum_m a_{nm} K_{nm} \sin(x_n - x_m),
\end{align}
where $x_n$ represents the phase of the $n$-th oscillator with inertia $m_n>0$, torque $P_n$, and damping $\gamma_n$.
$K_{nm}$ is the coupling strength of the edge connecting $n$ and $m$.
The connectivity structure is given by the $N \times N$ adjacency matrix $\bm A$ with elements $a_{nm} \in \{0, 1\}$.

By introducing the dynamic coupling variable $\kappa_{nm}$, \eqref{eq:Kuramoto_second_order} can be rewritten as a system of $N$ adaptively coupled phase oscillators
\begin{align}
    \label{eq:kuramoto_node}
    m_n \dot x_n + \gamma_n x_n &= \sum_m a_{nm} \kappa_{nm}
    \\
    \label{eq:kuramoto_edge}
    \dot\kappa_{nm} &= P_{nm} - K_{nm} \sin(x_n-x_m),
\end{align}
where $P_{nm} = - P_{mn}$, and $\sum_m P_{nm} = P_n$.
In the stationary, phase-locked state, all oscillators oscillate with a common frequency $\dot x_n^\circ = \Omega_\text{sync} = \sum_n P_n / \sum_n \gamma_n$ and satisfy the steady state equations $P_{nm} = K_{nm} \sin(x_n^\circ-x_m^\circ)$.
Again, we can assume $\sum_n P_n = 0$ without loss of generality.
The $\kappa_{nm}^\circ$ correspond to the excess energy that flowed over the line in order to reach the steady state from an initial condition.

The system can be expressed as a system of coupled node variables $\bm{x} = (x_1, x_2, \dots, x_N)^\top$ and lexicographically ordered edge variables 
$\bm{\kappa} = \left( \kappa_{nm} \mid a_{nm} = 1,\; n<m \right)$.  In this model, the antisymmetry $\kappa_{nm} = -\kappa_{mn}$ holds for the steady state and is conserved by the dynamics.
The $N$ node and $L$ edge variables are linked via an $N \times L$ signed incidence matrix $\bm B$, with all edges oriented from node $n$ to node $m$. The entries of the matrix indicate how the edges connect to the nodes: a value of $+1$ indicates that the edge is directed towards a node, $-1$ indicates the edge is directed away from a node, and $0$ means the node is not connected to that edge. Denote $l = (l_1, l_2) $, then
\begin{align}
B_{nl} &= \delta_{nl_1} - \delta_{nl_2} \; .
\end{align}
In \eqref{eq:kuramoto_node}, we can use $\bm B$ to sum over the edge variables, and in \eqref{eq:kuramoto_edge} we can use $\bm B^\top = \bm B^\dagger$ to represent phase differences. Introducing $\bm M = \text{diag}(m_n)$, $\bm \Gamma = \text{diag}(\gamma_n)$, $\bm K = (\text{diag}(K_{nm}) \mid a_{nm} = 1, n<m)$ and $\bm P = (P_{nm} \mid a_{nm} = 1, n<m)$, the system is written as

\begin{align}
\bm M \dot{\bm x} + \bm \Gamma \bm x &= \bm B \bm \kappa\\
\dot {\bm \kappa} &= \bm P - \bm K \sin(\bm B^\top \bm x)).
\end{align}

To linearize the system around the phase-locked state, we introduce error coordinates $\Delta \bm x  = \bm x -\bm x^\circ$ and $\Delta \bm \kappa  = \bm \kappa - \bm \kappa^\circ$.
The linearized system reads

\begin{align}
\bm M \Delta \dot {\bm x} + \bm \Gamma \Delta \bm x &= \bm B \Delta \bm \kappa\\
\Delta \dot {\bm \kappa} &= \bm K [\cos(\bm B^\top \bm x^\circ)] \bm B^\top \Delta \bm x,
\end{align}
where $[\cos(\bm B^\top \bm x^\circ)]$ is a diagonal matrix.

In Laplace space, the transfer operators of the system  are
\begin{align}
    \Delta {\bm x} &= \bm T^{ve}(s) \bm B \Delta \bm \kappa
    \\
    \bm T^{ve}(s) &= \frac{1}{\bm M s + \bm \Gamma}
    \\
    \Delta {\bm \kappa} &= -\bm T^{ev}(s) \bm B^\top \Delta \bm x
    \\
    \bm T^{ev}(s) &= \frac{1}{s} \bm K[\cos\bm B^\top \bm x^\circ].
\end{align}

With $\bm\chi=\bm B \bm \kappa$, we have
\begin{align}
    \bm \Delta\bm x &= \frac{1}{\bm M s + \bm \Gamma} \bm{\Delta \chi}
    \\
    \bm{\Delta \chi} &= -\bm B \frac{1}{s} \bm K[\cos\bm B^\top \bm x^\circ] \bm B^\top \bm{\Delta x},
\end{align}
and we can write the interconnected feedback loop as $\bm B^\top \bm T^{ve} \bm B \# \bm T^{ev}$.

To apply Proposition \ref{thm:mixed gain phase cutoff, block}, we must ensure that $\bm T^{ve}$ and $\bm T^{ev}$ satisfy the conditions for (semi-)stability and (semi-)sectoriality. Moreover, the DC phase center of both transfer operators is assumed be 0 by convention. 
The diagonal matrices $\bm T^{ve} = \bigoplus_n \bm T^{ve}_n$ and $\bm T^{ev} = \bigoplus_l \bm T^{ev}_l$ have the entries 

\begin{align}
\bm T^{ve}_n(s) &= \frac{1}{m_n s+ \gamma_n} 
\\
\bm T^{ev}_l(s) &= \frac{1}{s} K_{nm} \cos(x_n^\circ - x_m^\circ).
\end{align}

For $\bm T^{ve}(s)$ to be stable, the poles at $s = -\frac{\gamma_n}{m_n}$ must lie in the left half of the complex plane, which requires $\frac{\gamma_n}{m_n}>0$. Since the inertia $m_n>0$, it follows that the damping must be positive: $\gamma_n>0$. It follows that the DC phase center is $0$. Since $\bm T^{ve}(i\omega)$ is a diagonal matrix, its numerical range is the complex hull of the diagonal entries, thus $\bm T^{ve}(s)$ is frequency-wise sectorial.

$\bm T^{ev}(s)$ is marginally stable (or semi-stable) because all poles lie on the imaginary axis at $s=0$. Since $s=0$ is a pole, the DC phase center is evaluated at $s=\epsilon^+$. For $\bm T^{ev}$ to be frequency-wise semi-sectorial and have constant rank along indented imaginary axis with a DC phase center of 0, $\bm T^{ev}(\epsilon^+)$ must be positive definite, which leads to the condition 

\begin{align}
\bm T^{ev}_l(\epsilon^+) &= \frac{1}{\epsilon^+} K_{nm} \cos(x_n^\circ-x_m^\circ) > 0\; ,
\\
|x_n^\circ-x_m^\circ| &< \frac{\pi}{2} \quad\text{if}\quad K_{nm}>0\; ,
\\
|x_n^\circ-x_m^\circ| &> \frac{\pi}{2} \quad\text{if}\quad K_{mn}<0\; ,
\end{align}
for all connected $(n,m)$.

Next, we evaluate the phase and gain conditions of Proposition \ref{thm:mixed gain phase cutoff, block}. The phases of $\bm T^{ve}$ range from $0$ to $-\frac{\pi}{2}$, with $\phi(\bm T^{ve}_n(i\omega)) \in [0,-\frac{\pi}{2}]$ for $\omega \in [0, \infty]$.
The phases of $\bm T^{ev}$ are constant over all finite frequencies, with $\phi(\bm T^{ev}_l(i\omega)) = -\frac{\pi}{2}$ for $\omega \in [0, \infty]\setminus \Omega$.
With $\omega_c \in (0,\infty)$, the phase condition 
\begin{align}
\overline\phi(\bm T^{ve}(i\omega)) + \overline\phi(\bm T^{ev}(i\omega)) < \pi\\
\underline\phi(\bm T^{ve}(i\omega)) + \underline\phi(\bm T^{ev}(i\omega)) > -\pi 
\end{align}
is fulfilled for each $\omega \in [0, \omega_c) \backslash \Omega$.

$\bm T^{ve}$ and $\bm T^{ev}$  have the singular values 
\begin{align}
\sigma(\bm T^{ve}_n(i\omega)) &= \frac{1}{\sqrt{(m_n\omega)^2+\gamma_n^2}}
\\
\sigma(\bm T^{ev}_l(i\omega)) &= \frac{1}{\omega} K_{nm} \cos(x_n^\circ - x_m^\circ), 
\end{align}
where the gains $\overline\sigma(\bm T^{ve}(i\omega))$ and $\overline\sigma(\bm T^{ev}(i\omega))$ correspond to the largest singular values. With $D$ being the maximum degree in the network, $\overline\sigma(\bm B)^2$ is bounded from above by $2D$ (see \eqref{eq:gain_BB}-\eqref{eq:gain_laplacian_bb}).
We can always find a cut-off frequency $\omega_c$ so that the gain condition
\begin{align}
\overline\sigma(\bm B)^2\overline\sigma(\bm T^{ve}) \overline\sigma(\bm T^{ev}) \le 2 D \frac{K_{nm} \cos(x_n^\circ - x_m^\circ)}{\omega \sqrt{(m_n\omega)^2+\gamma_n^2}} < 1 
\end{align}
is fulfilled for each $\omega \in [\omega_c, \infty)$. 

These results show that we can apply Proposition \ref{thm:mixed gain phase cutoff, block} to prove  stability for oscillators with $m_n > 0$, $\gamma_n > 0$, positive coupling $K_{nm} > 0$, and $|x_n^\circ - x_m^\circ| < \frac{\pi}{2}$ for connected $(n,m)$. For oscillators with negative coupling $K_{nm} < 0$, stability is ensured when the phase differences satisfy $|x_n^\circ - x_m^\circ| > \frac{\pi}{2}$ for connected $(n,m)$. 
This is basically the typical Laplacian formulation, but we are allowed to leave out $\bm B$ and $\bm B^\top$ and only analyze the line weights.

\subsection{An adaptive Kuramoto model}
We study the adaptive Kuramoto model presented in \citet{do_topological_2016}, with the modification that $c_{nm}$ can vary for each link. The system consists of $N$ adaptively coupled phase oscillators
\begin{align}
\dot x_n &= \psi_n - \sum_{m \neq n}^N A_{nm} \sin(x_n-x_m)\\
\dot A_{nm} &= \cos(x_n-x_m) - c_{nm} A_{nm},
\end{align}
where $x_n$ denotes the phase and $\psi_n$ the intrinsic frequency of node $n$. The coupling matrix $\mathbf{A} \in \mathbb{R}^{N \times N}$ defines an undirected, weighted network, where two oscillators $n,m$ are connected if $A_{nm} = A_{mn} \neq 0$. The coupling is adaptive and co-evolves with the dynamics of the oscillators. 
In the stationary phase-locked state all oscillators oscillate with a  common frequency $\Psi = \frac{1}{N} \sum_n \psi_n$, which we can assume to be zero without loss of generality, and $A_{nm}^\circ = \cos(x_m^\circ - x_n^\circ)/c_{nm}$.  

The system can be expressed as a system of coupled node variables $\bm{x} = (x_1, x_2, \dots, x_N)^\top$ and edge variables 
$\bm a = (A_{nm} \mid n<m)$, corresponding to the elements from the upper triangular part of the coupling matrix $\bm A$. The $N$ node and $L$ edge variables are linked via the $N \times L$ incidence matrix $\bm B$, where we assume that each edge is oriented from node $n$ to node $m$, with $n<m$ for all edges. Introducing $\bm C = (\text{diag}(c_{nm}) \mid n<m$) and the outputs $\bm o$ and $\bm u$ of the node and edge dynamics, respectively, the system is written as an input-output system matching the form of \eqref{eq:coupled_system_1}-\eqref{eq:coupled_system_4}.
\begin{align}
\dot {\bm x} &= \bm \psi + \bm u \\
\bm o &= \bm x\\
\dot {\bm a} &= \cos(\bm B^\top \bm o) - \bm C\bm a  \\
\bm u &= - \bm B ( \bm a \circ \sin(\bm B^\top \bm o)),
\end{align}
where $\circ$ denotes element-wise multiplication.

To linearize the system around the phase-locked state, we introduce error coordinates $\Delta \bm x  = \bm x -\bm x^\circ$,  $\Delta \bm o  = \bm o -\bm o^\circ$, $\Delta \bm a  = \bm a - \bm a^\circ$ and $\Delta \bm u  = \bm u - \bm u^\circ$, where $\dot {\bm x}^\circ = \Psi$ and $\bm a^\circ = \bm C^{-1} \cos(\bm B^\top \bm x^\circ)$. The linearized system reads
\begin{align}
\Delta \dot {\bm x} &= \Delta \bm u \\
\Delta \bm o &= \Delta \bm x\\
\Delta \dot {\bm a} &= -\bm C \Delta \bm a - [\sin(\bm B^\top \bm x^\circ)] \bm B^\top \Delta \bm o\\
\Delta \bm u &= -\bm B \left([\sin(\bm B^\top \bm x^\circ)] \Delta \bm a + \bm C^{-1} [\cos^2(\bm B^\top \bm x^\circ)] \bm B^\top \Delta \bm o \right).
\end{align}
In Laplace space, the transfer operators of the system  are
\begin{align}
    \Delta {\bm o} &= \bm T^{ve}(s) \Delta \bm u
    \\
    \bm T^{ve}(s) &= \frac{1}{s}
    \\
    \Delta {\bm u} &= -\bm B \bm T^{ev}(s) \bm B^\top \Delta \bm o
    \\
    \bm T^{ev}(s) &= \bm C^{-1} [\cos^2(\bm B^\top \bm x^\circ)] -  (s[\bm 1] +\bm C)^{-1} [\sin^2(\bm B^\top \bm x^\circ)].
\end{align}

To simplify the analysis, we introduce $\Delta \bm v = \frac{1}{s} \Delta \bm u$,
\begin{align}
    \Delta {\bm o} &= \bm T^{ve}(s) \Delta \bm v
    \\
    \bm T^{ve}(s) &= \bm I
    \\
    \Delta {\bm v} &= -\bm B \bm T^{ev}(s) \bm B^\top \Delta \bm o
    \\
    \bm T^{ev}(s) &= (s\bm C)^{-1} [\cos^2(\bm B^\top \bm x^\circ)] \nonumber\\&\quad -  (s(s[\bm 1] +\bm C))^{-1} [\sin^2(\bm B^\top \bm x^\circ)].
\end{align}

To apply Proposition \ref{thm:mixed gain phase cutoff, block}, we must ensure that the transfer operators satisfy the conditions for (semi-)stability, (semi-)sectoriality and DC phase center.
$\bm T^{ve} = \bm I$ is stable and frequency-wise semi-sectorial with a DC phase center of $\bm T^{ve}(0) =0$. 
The diagonal matrix $\bm{T^{ev}} = \bigoplus_l \bm{T^{ev}}_l$ has the entries 
\begin{comment}
\begin{align}
\bm T^{ev}_m(s) =  \frac{1}{c_{nm}} \cos^2(x_n^\circ - x_m^\circ) -  \frac{1}{s+c_{nm}} \sin^2(x_n^\circ - x_m^\circ),
\end{align}
\end{comment}
\begin{align}
\bm T^{ev}_l(s) =  \frac{1}{s c_{nm}} \cos^2(x_n^\circ - x_m^\circ) -  \frac{1}{s(s+c_{nm})} \sin^2(x_n^\circ - x_m^\circ)
\end{align}
$\bm T^{ev}(s)$ has poles at $s=0$ and at $s = -c_{nm}$, which requires $c_{nm}\ge0$ for 
semi-stability and $c_{nm}\neq0$ for boundedness. For $\bm T^{ev}$ to be frequency-wise semi-sectorial and have full rank along the indented imaginary axis with a DC phase center of 0, $\bm T^{ev}(\epsilon^+)$ must be positive definite, which leads to the condition

\begin{align}
\frac{1}{ c_{nm}} \cos^2(x_n^\circ - x_m^\circ) -  \frac{1}{\epsilon^+ + c_{nm}} \sin^2(x_n^\circ - x_m^\circ) &> 0.
\end{align}
This must hold for arbitrarily small $\epsilon^+$, thus
\begin{align}
\frac{1}{ c_{nm}} \cos^2(x_n^\circ - x_m^\circ) -  \frac{1}{ c_{nm}} \sin^2(x_n^\circ - x_m^\circ) &> 0\\
\cos(2(x_n^\circ-x_m^\circ)) &> 0\\
|x_n^\circ-x_m^\circ| &> \frac{\pi}{4}.
\end{align}
Next, we analyze the interconnected system $\bm T^{ve} \# \bm B \bm T^{ev} \bm B^\top$ and evaluate the phase and gain conditions of Proposition \ref{thm:mixed gain phase cutoff, block}.

$\bm T^{ve}$ has phase $\phi(\bm T^{ve}(i\omega)) = 0$ for all $s$.
The phases of $\bm T^{ev}$ lie in $[0,\phi_c] - \frac{\pi}{2}$ for $\omega \in [0, \infty]\setminus \Omega$, with $\phi_c \in \left[\arctan\frac12,\frac{\pi}{4}\right]$ depending on the phase differences:
\begin{align}
    \phi_c \coloneqq \max_{l=(n,m)} \arctan\left\{ \left[2 \cos^2(x_n^\circ - x_m^\circ) \right]^{-1} \right\}.
\end{align}
Hence, the phase condition 
\begin{align}
\overline\phi(\bm T^{ve}(i\omega)) + \overline\phi(\bm T^{ev}(i\omega)) &< \pi\\
\underline\phi(\bm T^{ve}(i\omega)) + \underline\phi(\bm T^{ev}(i\omega)) &> -\pi 
\end{align}
is fulfilled for each $\omega \in [0, \infty]\backslash \Omega$.
We therefore do not need the gain condition.

We can apply Proposition \ref{thm:mixed gain phase cutoff, block} and prove stability for $c_{nm} > 0$ and oscillators satisfying $|x_n^\circ-x_m^\circ| < \frac{\pi}{4}$ for all edges. This agrees with the necessary stability conditions presented in \citet{do_topological_2016} for the case of homogeneous parameters, showing that the condition is exact in this setting. Thus, the combined results completely settle the linear stability of this adaptive Kuramoto model.

\section{Discussion}
This paper introduced a new method for studying the linear stability of steady states in adaptive networks. By leveraging new results from linear algebra and control theory, we could give novel stability conditions. Applying these to the classical Kuramoto model and the Kuramoto model with inertia (written as adaptive first-order oscillators) demonstrated that they are not overly conservative, and allow straightforward generalization to heterogeneous parameters in a local fashion. In the case of truly adaptive oscillators, the sufficient conditions we obtain even match necessary conditions derived previously. However, our conditions apply much more broadly, including in the case of heterogeneous parameters as well. This gives a complete characterization of the stability of steady states of this adaptive Kuramoto model in situations in which the necessary result applies.

One notable aspect of the theory developed here, is that nodes and edges are allowed to be highly heterogeneous in their states and dynamics. Contrary to work in the master stability tradition, we do not require any factorization of the systems Jacobian of the type $\bm J = \bm F \otimes \bm I + \bm G \otimes \bm L$. This could be particularly important for the study of multilayer networks with different topologies in the different layers, where no such factorization is available.

The natural way in which phase conditions and network structure interact, also allows many further generalizations of the above framework. Due to the fact that edge states have their own full dynamics, there is considerable freedom in choosing exactly how the system is partitioned into edges and nodes. The structured perturbation approach of \citet{woolcock_mixed_2023} is an example of this.

There also are further variations of phase-based stability conditions that could be interesting to study, particularly for directed networks. Notions like $r$-sectoriality can combine gain and phase information in ways that encode directedness in a natural way \cite{wang_phases_2023}.

Finally, the examples explored above used a signed incidence matrix, and thus Laplacian-type diffusive coupling. However, the theorem leaves the nature of $\bm B$ completely open. We leave exploring the use of this theorem for systems where the coupling is not Laplacian in nature, e.g., additive, to future work.

\section{Data availability statement}
The data that supports the findings of this study are available within the article.

\section{Author Declarations}
The authors have no conflicts to disclose

\section{Acknowledgements}

This work was supported by the OpPoDyn Project, Federal Ministry for Economic Affairs and Climate Action (FKZ:03EI1071A).

J.N. gratefully acknowledges support by \mbox{BIMoS} (TU Berlin), Studienstiftung des Deutschen Volkes, and the Berlin Mathematical School, funded by the Deutsche Forschungsgemeinschaft (DFG, German Research Foundation) Germany's Excellence Strategy --- The Berlin Mathematics Research Center MATH+ (EXC-2046/1, project ID: 390685689).

R.D. was supported by the Swiss
National Science Foundation under grant
nr. 200021\_215336.

\bibliography{zotero,somemore}

\appendix

\section{Constant rank Lemma}
\label{lem:const rank}
\begin{lem}
    Let $\bm M_1(s),...,\bm M_N(s)$ be a set of matrix valued functions. 
    Assume that for any $s$, all the $\bm M_i(s)$ are semi-sectorial, have full rank, and their maximum and minimum phases satisfy 
    \begin{align}
        \max_i\overline{\phi}\left({\bm M}_l(s)\right) - \min_i\underline{\phi}\left({\bm M}_l(s)\right) &< \pi\, ,
    \end{align}
    Then
    \begin{align}\label{eq:kernels}
        \ker\left(\bm B \bigoplus_i \bm M_i(s) \bm B^\dagger\right) &= \ker\left(\bm B^\dagger\right)\, .
    \end{align}
    In particular, the matrix $    \bm B\bigoplus_i \bm M_i(s) \bm B^\dagger$ has constant rank as a function of $s$. 
\end{lem}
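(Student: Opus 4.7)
The plan is to prove \eqref{eq:kernels} by two inclusions, and then deduce constant rank by rank-nullity. The inclusion $\ker(\bm B^\dagger) \subseteq \ker(\bm B\bigoplus_i\bm M_i(s)\bm B^\dagger)$ is immediate. For the reverse, my strategy is a quadratic form argument: given $\bm v$ in the left-hand kernel, set $\bm w = \bm B^\dagger \bm v$ and compute
\begin{align*}
0 = \bm v^\dagger \bm B \bigoplus_i \bm M_i(s) \bm B^\dagger \bm v = \bm w^\dagger \Bigl(\bigoplus_i\bm M_i(s)\Bigr) \bm w.
\end{align*}
If I can show that the direct sum $\bm M(s) := \bigoplus_i \bm M_i(s)$ is sectorial, i.e., $0 \notin W(\bm M(s))$, then this quadratic form can only vanish for $\bm w = 0$, forcing $\bm v \in \ker(\bm B^\dagger)$.

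Hence the real work reduces to establishing sectoriality of $\bm M(s)$. First I would upgrade each individual block $\bm M_i(s)$ from semi-sectorial to sectorial. The hypothesis $\max_i\overline\phi(\bm M_i(s)) - \min_i\underline\phi(\bm M_i(s)) < \pi$ bounds each individual angular span by strictly less than $\pi$, so $W'(\bm M_i(s))$ is a convex cone of opening $<\pi$ with apex at the origin. Were $0 \in W(\bm M_i(s))$, then $0$ would be a sharp corner of $W(\bm M_i(s))$, and the classical corner-point theorem of numerical ranges forces any such corner to be an eigenvalue, contradicting the full rank of $\bm M_i(s)$. Hence each $\bm M_i(s)$ is in fact sectorial. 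Invoking the convex-hull property \eqref{eq:conv-hull}, $W(\bm M(s)) = \mathrm{Conv}(\cup_i W(\bm M_i(s)))$ lies in a common open cone of angle $<\pi$ with apex at $0$; such a cone is contained in an open half-plane missing the origin, so $0 \notin W(\bm M(s))$ as required. The quadratic form argument then closes the nontrivial inclusion, and rank-nullity yields constant rank since $\dim\ker(\bm B^\dagger)$ is independent of $s$.

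The main obstacle is the semi-sectorial-plus-full-rank $\Rightarrow$ sectorial step. It requires the classical fact that corner points of the numerical range are eigenvalues, and one must carefully use that the $<\pi$ angular span hypothesis genuinely makes $0$ a corner (as opposed to a boundary point lying on a flat side, in which case the corner theorem would not directly apply). Once this is in place, the remaining work amounts to bookkeeping with the structural properties of numerical ranges and direct sums already collected in Section~\ref{sec:stability in adaptive networks}.
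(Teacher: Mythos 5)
Your proposal is correct and follows essentially the same route as the paper's own proof: upgrading each block from semi-sectorial plus full rank to sectorial via the corner-point (eigenvalue) theorem under the $<\pi$ span hypothesis, placing all numerical ranges in a common open half-plane via the convex-hull property, and using the quadratic form $\bm w^\dagger \bigl(\bigoplus_i \bm M_i(s)\bigr)\bm w$ with $\bm w = \bm B^\dagger \bm v$ to obtain the nontrivial kernel inclusion, with constant rank following since $\ker(\bm B^\dagger)$ is independent of $s$. No gaps beyond the minor care you already flag about $0$ being a genuine corner, which the paper handles the same way.
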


\begin{proof}
    For semi-sectorial matrices satisfying the minimum-maximum angle condition above, $0$ can only be on a corner of the numerical range, and thus has to be an eigenvalue. Thus, for such matrices, being full rank implies sectoriality and $0$ can not be in their numerical range. Further, all the numerical ranges of the $\bm M_i(s)$ at fixed $s$ lie in the same open half plane.

    Let us define $\bm M(s) \coloneqq \bigoplus_iM_i(s)$. 
    First, it is clear that $\ker \left( \bm B \bm M(s) \bm B^\dagger \right) \subset \ker \left( \bm B^\dagger \right)$.
    Indeed, if $\bm B^\dagger{\bm z} = 0$, then $\bm B \bm M(s) \bm B^\dagger{\bm z} = 0$. 

    Second, let ${\bm z}\notin\ker\left(\bm B^\dagger\right)$, i.e., $\bm B^\dagger{\bm z} \neq 0$. 
    Following assumption that all $\bm M_i(s)$ have their numerical ranges in the same open half-plane, $\bm M(s)$ has full rank and then $\bm M(s)B^\dagger{\bm z} \neq 0$. 
    Furthermore, by the convex hull property, the numerical range of $\bm M(s)$ lies in an open half-plane, and therefore
    \begin{align}
        {\bm z}^\dagger \bm B \bm M(s) \bm B^\dagger {\bm z} &\neq 0\, ,
    \end{align}
    meaning that $\bm B \bm M(s) \bm B^\dagger{\bm z}$ cannot be zero. 
    Hence, ${\bm z}$ does not belong to the kernel of $\bm B \bm M(s)\bm B^\dagger$, which concludes the proof of Eq.~\eqref{eq:kernels}. 

    Finally, as neither $\bm B$ nor its kernel depend on $s$, the  kernel is constant with respect to $s$

\end{proof}

\end{document}